\documentclass[letterpaper, 10pt, conference]{ieeeconf}
\IEEEoverridecommandlockouts
\usepackage{amsmath,amssymb}
\usepackage{mathtools}
\usepackage{bm}
\usepackage{amsthm}
\usepackage{cite}
\usepackage{graphicx}
\usepackage{subcaption}
\usepackage{booktabs}
\let\labelindent\relax
\usepackage{enumitem}
\usepackage{xcolor}
\usepackage{tikz}
\usepackage[hidelinks]{hyperref}
\usepackage{cleveref}

\crefname{theorem}{Theorem}{Theorems}
\crefname{lemma}{Lemma}{Lemmas}
\crefname{proposition}{Proposition}{Propositions}
\crefname{corollary}{Corollary}{Corollaries}
\crefname{definition}{Definition}{Definitions}
\crefname{assumption}{Assumption}{Assumptions}
\crefname{remark}{Remark}{Remarks}
\crefname{section}{Section}{Sections}
\crefname{subsection}{Section}{Sections}
\crefname{figure}{Fig.}{Figs.}
\crefname{table}{Table}{Tables}

\theoremstyle{plain}
\newtheorem{theorem}{Theorem}
\newtheorem{lemma}{Lemma}
\newtheorem{proposition}{Proposition}
\newtheorem{corollary}{Corollary}

\theoremstyle{definition}
\newtheorem{definition}{Definition}

\theoremstyle{remark}
\newtheorem{remark}{Remark}

\begin{document}

\title{\LARGE \bf Latent Safety Filters: When a Lossy Encoder Admits a Transferable Certificate}

\author{Johannes Mootz$^{1,2}$, Zahra Nili Ahmadabadi$^{3}$ and Reza Akhavian$^{1}$%
\thanks{$^{1}$Department of Civil, Construction, and Environmental Engineering, San Diego State University, San Diego, CA 92182, USA. {\tt\small \{jmootz, rakhavian\}@sdsu.edu}}%
\thanks{$^{2}$Department of Mechanical and Aerospace Engineering, University of California San Diego, La Jolla, CA 92093, USA.}%
\thanks{$^{3}$Department of Mechanical Engineering, San Diego State University, San Diego, CA 92182, USA. {\tt\small zniliahmadabadi@sdsu.edu}}%
}

\maketitle
\thispagestyle{empty}
\pagestyle{empty}

\begin{abstract}
Latent safety filters certify safety on a learned low-dimensional representation of the state, enabling constraints that resist analytic description.
Because the encoder is lossy, a filter can report safe while the physical state is unsafe, with no detectable model error.
Existing transfer conditions leave the effect of discarded safety information implicit.
We ask when a lossy encoder admits a safety certificate that transfers to the physical system, and show the answer is governed by the detectability of the discarded safety-relevant dynamics.
We construct a system whose latent model is exact and whose latent signals always report safe, while the physical state becomes arbitrarily unsafe.
For this system no certificate exists and no monitor downstream of the encoder can detect the failure.
When the discarded dynamics contract, a latent barrier certifies true safety up to two explicit margins, one for the latent-model error and one for the variation of safety across states the encoder cannot distinguish.
In the linear case and under boundedness and non-degeneracy conditions, every calibrated barrier transfers with a finite margin when the safety-relevant subspace is detectable, and none does otherwise.
On learned cartpole encoders, the model error does not indicate for which representations the estimated bound is non-vacuous, while the second margin does.
\end{abstract}

\section{Introduction}\label{sec:intro}

Latent safety filters replace the physical state with a learned representation and certify safety there.
This allows them to encode specifications that resist an analytical description, such as not spilling a container \cite{nakamura2025, kim2026latent}, or not overheating its contents \cite{kim2026latent}, working directly from high-dimensional observations.
They have been realized as Hamilton-Jacobi reachability filters \cite{nakamura2025}, as smooth latent control barrier functions (CBFs) \cite{nakamura2025train}, and as in-distribution barrier functions \cite{castaneda2023distribution}.
Extensions add out-of-distribution awareness \cite{seo2025uncertainty}, pre-trained vision backbones \cite{tabbara2025designing}, and runtime constraint adaptation \cite{agrawal2025anysafe}.
Often, encoder and latent dynamics are learned jointly as a world model, and the constraint is specified in its latent space \cite{nakamura2025,seo2025uncertainty}.

The guarantee must ultimately hold for the physical system and not for its representation.
Lutkus et al.\ give dynamics-aware conjugacy conditions and transfer a latent barrier or Lyapunov certificate back to the state, while charging a slack term for the latent-model error \cite{lutkus2025latent}.
Reduced-order models and simulation functions carry control barrier functions to a complex plant in the same way \cite{cohen2024safety,bousias2026transferring}.
In these works the measured quantity is the latent-model error, and the certificate holds with a conservative margin.
Reed et al.\ verify temporal-logic specifications on a finite abstraction of learned latent dynamics, with the encoder restricted to be convex so that the regions of interest are represented correctly \cite{reed2025scalable}.

However, a margin on the model error controls only the part of safety that the representation can see, and restricting the encoder class does not help with an encoder that is already trained.
A learned encoder compresses information and is lossy, so distinct physical states can collapse to a single representation $z$.
An encoder trained for reconstruction, or for the augmentation-invariance objectives common to DINOv2 backbones \cite{oquab2023dinov2}, does not necessarily preserve safety-relevant information.
A barrier certified on the latent space can then read safe while the physical state is unsafe. If the motion of the representation is reproduced exactly, it can do so with no detectable model error.

Kim et al.\ demonstrate such failures on hardware \cite{kim2026latent}.
They identify two failure modes: estimation gaps, where the current observation does not reveal a safety-critical quantity, and prediction gaps, where a failure is visible once it occurs but cannot be anticipated.
A mutual-information measure of safety observability diagnoses the first and a rollout-based measure of predictability the second, and the countermeasures are privileged multimodal supervision and conformal calibration.
Both diagnostics are empirical, and their closing question, when a learned representation is sufficient for control, is the one we formalize for safety.
Our conditions add what an information measure does not account for, the dynamics of the missing feature.
A missing feature that contracts costs a finite margin, while one that is hidden, unstable and unbounded in the safety variable $h$ admits no certificate, and no calibration downstream of the encoder can mitigate it, since no causal function of $z$ and the input $u$ separates the safe from the unsafe trajectory.

Wu et al.\ evaluate a latent certificate in the worst case over the encoder image of a state-uncertainty set, and uncover point-based certificates that report safe when they should not \cite{wu2026points}.
Their set is a bounded estimation error around a state estimate, while the fiber of a lossy encoder is fixed by the encoder, need not be bounded, and is not reduced by better sensing.

The prior literature leaves two questions open. The first is whether the representation retains enough of the safety variable $h$ for a latent certificate to stay valid \cite{kim2026latent}.
This is a weaker demand than reconstructing the full state.
The measurement-robust and output-feedback literature answers the related question of how a bound on estimation error becomes a safety margin \cite{dean2020measurement,cosner2021measurement,agrawal2022safe,chou2022safe}, but it assumes an explicit state estimate and not a learned representation.
The second question is which discarded directions are harmless to safety and which destroy it.
The geometric tools already exist: invariant foliations \cite{szalai2023data} and horizontal--vertical decompositions \cite{friedl2026reduced}.
For filters that certify an output of the physical state, this question is already answered. 
Such a filter leaves the internal dynamics unconstrained, and \cite{choi2026safety} gives minimum-phase conditions under which the full state stays bounded.
There the filter's own action drives a measured internal state unbounded, while the failure studied here occurs under zero input in a part of the state that the representation does not carry.

We show that detectability, in the sense of \cite{sontag1997output,hermann1977nonlinear}, answers the second question and governs whether a certificate can exist at all.
The fiber defect, the variation of true safety across the states the encoder cannot distinguish, answers the first question and measures what an existing certificate guarantees.
Our contributions are the following.

\begin{itemize}
\item We identify the discarded directions that defeat transfer.
A direction unobservable from the representation, unstable, and with true safety unbounded below along the fiber admits no certificate at all (\cref{thm:nec}).
No causal monitor of $(z,u)$ can detect the failure (\cref{cor:imp}).
\item We prove a transfer bound for the recoverable case (\Cref{thm:transfer}). A latent control barrier function bounds true safety from below, up to two explicit margins, one for the latent-model error and one for the fiber defect.
\item We give conditions under which the bound applies.
Contracting discarded dynamics keep the defect finite (\cref{lem:stable}), and in the linear case, under boundedness and non-degeneracy conditions, every calibrated barrier transfers with a finite margin when the safety-relevant subspace is detectable, and none does otherwise (\cref{prop:linear}).
\end{itemize}

\section{Problem Setup}\label{sec:setup}

We consider a closed-loop system on a domain $\mathcal D_x\subseteq\mathbb R^{n_x}$,
\[
  \dot x \;=\;f\bigl(x,\,\pi(E(x))\bigr)
\]
in which $f$ is the open-loop plant, the encoder $E:\mathbb R^{n_x}\!\to\mathbb R^{n_z}$ produces the representation $z=E(x)$, the controller $\pi:\mathbb R^{n_z}\!\to\mathbb R^{n_u}$ acts on $z$ alone, $u=\pi(z)$, and $f_z$ models how the representation evolves in latent space.
We write $F(x):=f(x,\pi(E(x)))$ for the closed-loop vector field.
We make no distinction between a nominal and a filtered policy.
When a latent safety filter modifies the input, $\pi$ denotes the filtered policy, which is again a function of $z$.
Throughout, $x$ is the physical state, on which dynamics and safety are defined, even when it is observed only through the encoder.
$R\subseteq\mathcal D_x$ denotes the closed-loop reachable set from the initial conditions of interest, which is forward invariant by construction.
Let $E$ be continuously differentiable and $F$ be locally Lipschitz with forward-complete trajectories on $\mathcal D_x$, so that $z(\cdot)$ and $u(\cdot)$ are defined for all $t\ge0$.

\subsection{Safety, model error and certificates}\label{sec:safety} 
Safety is specified at two levels: on the representation, as a continuously differentiable barrier $\phi:\mathbb R^{n_z}\!\to\mathbb R$, and on the state, as a true safety function $h:\mathbb R^{n_x}\!\to\mathbb R$, with safe set $\mathcal S_x=\{x:h(x)\ge0\}$.
The latent model $f_z$ need not be exact.
Following \cite{lutkus2025latent}, the error is measured by the forward-conjugacy bound: $f_z$ is \emph{$\gamma$-forward-conjugate} on $\mathcal D_x$ if
\[
  \sup_{x\in\mathcal D_x}\bigl\|\partial E(x)\,F(x)-f_z(E(x),\pi(E(x)))\bigr\|\le\gamma,
\]
so $\gamma$ upper-bounds the mismatch between the true motion of the representation and the model's prediction over the domain.
The residual $r(t)=\partial E(x(t))\,F(x(t))-f_z(z(t),\pi(z(t)))$ is the part of the representation's true motion that $f_z$ fails to predict.

\begin{definition}[Certificate]\label{def:cert}
A function $\phi:\mathbb R^{n_z}\!\to\mathbb R$ \emph{certifies} safety if its lift under-approximates true safety, $\phi(E(x))\le h(x)$ for every reachable $x$.
Then $\phi(E(x))\ge0$ implies $h(x)\ge0$.
\end{definition}

Write $\bar\phi:=\phi(E(x))$ for the \emph{lift}, a function of the state and constant along each fiber, so the comparison $\bar\phi\le h$ is well-posed.
The under-approximation is the standard requirement that a conservative surrogate not overstate safety \cite{ames2019cbf}.

\Cref{def:cert} is the function-valued counterpart of the set-inclusion view of latent transfer \cite{lutkus2025latent}.
There the induced safe set $\mathcal S_z=(E(\mathcal S_x^c))^c$ collects the representations no unsafe state maps to, the inclusion $E^{-1}(\mathcal S_z)\subseteq\mathcal S_x$ guarantees that latent-safe states are truly safe, and the inequality $\bar\phi\le h$ quantifies how safe.

\begin{definition}[Fibers and indistinguishability]\label{def:indist}
The \emph{fiber} over a representation $z$ is $E^{-1}(z)$, the set of states that share it.
Two initial states are \emph{indistinguishable} if the closed-loop trajectories they generate produce the same representation $z(\cdot)$, and hence the same input $u(\cdot)=\pi(z(\cdot))$, for all $t\ge0$.
This is the notion of \cite{hermann1977nonlinear} for the closed loop under the fixed policy $\pi$.
Their notion for the open-loop plant asks for equal outputs under every admissible input and is stronger, so two states remain indistinguishable here if only an input the policy never applies would separate them.
\end{definition}

\subsection{A motivating example}\label{sec:motivating}

Let the state be $x = (x_1, x_2)$, let the encoder keep only the first coordinate, $E(x) = x_1$, and let the dynamics be
\begin{equation}
  \dot{x}_1 = u + c\,x_2, \qquad
  \dot{x}_2 = b\,x_1 - a\,x_2,
  \label{eq:motivating-family}
\end{equation}
with $c = 0$ unless stated otherwise: $a > 0$ makes the hidden coordinate stable, $a < 0$ unstable.
The representation is $z = x_1$, so the fibers are the vertical lines $\{x_1 = \mathrm{const}\}$ and the encoder is blind to position along $x_2$.
Because any certificate is constant along each fiber, it must under-approximate true safety uniformly over the reachable states the encoder confuses:
\begin{equation}
  \phi(z) \;\le\; \inf\bigl\{\, h(x) : x \in E^{-1}(z) \cap R \,\bigr\}.
  \label{eq:fiber-inf}
\end{equation}
How restrictive this is depends on the safety function.
If $h$ depends only on $x_1$, as in $h=1-x_1$, then $h$ is constant along every fiber, the representation already determines safety, and the discarded coordinate is irrelevant (\cref{fig:fiber-a}).
If instead $h$ depends on $x_2$, as in $h=(1-x_1)-\ell\,x_2$ with $\ell>0$, a fiber can cross the boundary $\{h=0\}$ (\cref{fig:fiber-b}), and three behaviors are possible.
For bounded $x_1$, a stable $x_2$ ($a>0$) is limited to a bounded band.
An $x_2$ that influences the representation ($c\neq0$) appears in the evolution of $z$, and recovery is an observer problem, where the direct reconstruction $x_2=(\dot z-u)/c$ has gain $1/c$.
Only an $x_2$ at once unstable and invisible to $z$ is unrecoverable at runtime, and even then only if $h$ keeps falling as $x_2$ grows.
A fall that saturates, as in \cref{fig:fiber-c}, still admits a conservative certificate.
In the unrecoverable case $c=0$ and $\dot z=u$ exactly, so the discarded direction leaves no trace in the model error and the failure is invisible to any runtime check on the model.
\Cref{fig:triage} organizes the possible outcomes.

\begin{figure}[!t]
\centering

\definecolor{paperblue}{RGB}{0,114,189}
\definecolor{paperred}{RGB}{191,38,38}
\definecolor{paleblue}{RGB}{235,245,252}
\definecolor{palegray}{RGB}{242,242,242}
\definecolor{refgray}{RGB}{115,115,115}
\definecolor{textgray}{RGB}{70,70,70}

\begin{subfigure}{0.32\columnwidth}\centering
\begin{tikzpicture}[scale=0.62]
  \useasboundingbox (-0.1,-0.6) rectangle (3.75,3.2);
  \fill[paleblue] (0,0) rectangle (1.8,2.4);
  \fill[palegray] (1.8,0) rectangle (3.0,2.4);

  \draw[->] (0,0)--(3.2,0) node[right,font=\scriptsize]{$x_1$};
  \draw[->] (0,0)--(0,2.6) node[above,font=\scriptsize]{$x_2$};

  \draw[paperred,thick] (1.8,0)--(1.8,2.4)
    node[above,paperred,font=\scriptsize]{$h=0$};

  \draw[densely dashed,semithick,refgray] (1,0)--(1,2.4);
  \node[rotate=90,font=\scriptsize,text=textgray] at (0.68,1.6) {fiber};
  \draw[->,thick,paperblue] (1,0.8)--(1,1.9);

  \node[font=\scriptsize,text=textgray] at (0.48,0.45) {safe};
  \node[font=\scriptsize,text=textgray] at (2.4,1.9) {unsafe};

  \fill (1,0) circle (1.3pt) node[below,font=\scriptsize]{$z$};
\end{tikzpicture}
\caption{}\label{fig:fiber-a}
\end{subfigure}\hfill
\begin{subfigure}{0.32\columnwidth}\centering
\begin{tikzpicture}[scale=0.62]
  \useasboundingbox (-0.1,-0.6) rectangle (3.75,3.2);
  \fill[paleblue] (0,0)--(0,2.2)--(2.8,0)--cycle;
  \fill[palegray] (0,2.2)--(0,2.4)--(3.0,2.4)--(3.0,0)--(2.8,0)--cycle;

  \draw[->] (0,0)--(3.2,0) node[right,font=\scriptsize]{$x_1$};
  \draw[->] (0,0)--(0,2.6) node[above,font=\scriptsize]{$x_2$};

  \draw[paperred,thick] (0,2.2)--(2.8,0)
    node[pos=0.70,above right,paperred,font=\scriptsize,inner sep=1.5pt]{$h=0$};

  \draw[densely dashed,semithick,refgray] (1,0)--(1,2.4);

  \draw[->,thick,paperblue] (1,0.8)--(1,1.9);

  \node[font=\scriptsize,text=textgray] at (0.48,0.45) {safe};
  \node[font=\scriptsize,text=textgray] at (2.4,1.9) {unsafe};

  \fill (1,0) circle (1.3pt) node[below,font=\scriptsize]{$z$};
\end{tikzpicture}
\caption{}\label{fig:fiber-b}
\end{subfigure}\hfill
\begin{subfigure}{0.32\columnwidth}\centering
\begin{tikzpicture}[scale=0.62]
  \useasboundingbox (-0.1,-0.6) rectangle (3.75,3.2);
  \fill[paleblue]
    (0,2.4)--(0,0)--(2.6,0)
    -- plot[smooth] coordinates
       {(2.6,0)(2.25,0.3)(2.0,0.6)(1.8,1.0)(1.7,1.5)(1.66,2.0)(1.64,2.4)}
    -- cycle;
  \fill[palegray]
    plot[smooth] coordinates
       {(2.6,0)(2.25,0.3)(2.0,0.6)(1.8,1.0)(1.7,1.5)(1.66,2.0)(1.64,2.4)}
    -- (3.0,2.4)--(3.0,0)--cycle;

  \draw[->] (0,0)--(3.2,0) node[right,font=\scriptsize]{$x_1$};
  \draw[->] (0,0)--(0,2.6) node[above,font=\scriptsize]{$x_2$};

  \draw[paperred,thick]
    plot[smooth] coordinates
    {(2.6,0)(2.25,0.3)(2.0,0.6)(1.8,1.0)(1.7,1.5)(1.66,2.0)(1.64,2.4)};
  \node[paperred,font=\scriptsize,anchor=west] at (1.90,1.10) {$h=0$};
  \draw[paperred,densely dotted] (1.6,0)--(1.6,2.4);
  \node[paperred,font=\scriptsize,below,text height=1ex,text depth=0pt] at (1.6,0) {$1{-}\ell$};

  \draw[densely dashed,semithick,refgray] (1.0,0)--(1.0,2.4);

  \draw[->,thick,paperblue] (1.0,0.8)--(1.0,1.9);

  \node[font=\scriptsize,text=textgray] at (0.48,0.45) {safe};
  \node[font=\scriptsize,text=textgray] at (2.4,1.9) {unsafe};

  \fill (1.0,0) circle (1.3pt) node[below,font=\scriptsize]{$z$};
\end{tikzpicture}
\caption{}\label{fig:fiber-c}
\end{subfigure}

\caption{The dashed line is the fiber $\{x_1=\text{const}\}$ over $z$, and the blue arrow is a trajectory that moves up the fiber while $z$ stays fixed.
(\subref{fig:fiber-a})~$h$ is constant along each fiber, so the trajectory is harmless.
(\subref{fig:fiber-b})~The same trajectory leaves the safe set.
(\subref{fig:fiber-c})~The fall of $h$ saturates, so a finite certificate remains possible.}
\label{fig:fibers}
\end{figure}

\begin{figure}[t]
\centering
\begin{tikzpicture}[
  q/.style={draw, rounded corners=2pt, align=center, inner sep=2.5pt,
            text width=3.8cm, font=\footnotesize},
  leaf/.style={draw, thick, rounded corners=2pt, align=center,
            inner sep=2.5pt, text width=3.5cm, font=\footnotesize},
  lab/.style={font=\scriptsize\itshape, inner sep=1.5pt, fill=white},
  >=stealth]
  \node[q] (q1) at (0,0)
    {Is the discarded direction\\ safety-relevant?\\[1pt]
     {\scriptsize $h$ not constant on the\\ reachable fiber
      (\cref{def:unbdd})}};
  \node[q] (q2) at (0,-1.95)
    {Is it hidden from the representation?\\[1pt]
     {\scriptsize indistinguishable in the $(z,u)$-history
      (\cref{def:indist})}};
  \node[q] (q3) at (0,-3.9)
    {Is it unboundedly safety-relevant?\\[1pt]
     {\scriptsize $h$ unbounded below on a reachable fiber
      (\cref{def:unbdd})}};
  \node[leaf] (l1) at (4.65,0)
    {\textbf{Exact transfer}\\[1pt]
     {\scriptsize the representation determines safety, up to the model-error margin}};
  \node[leaf] (l2) at (4.65,-1.95)
    {\textbf{Observer regime}\\[1pt]
     {\scriptsize bounded defect in the linear case
      (\cref{prop:linear})}};
  \node[leaf] (l3) at (4.65,-3.9)
    {\textbf{Transfer with floor}\\[1pt]
     {\scriptsize certificate transfers at the cost of a computable margin (\cref{thm:transfer}, V-A, V-C)}};
  \node[leaf] (l4) at (4.65,-5.55)
    {\textbf{No certificate, no monitor}\\[1pt]
     {\scriptsize the failure is invisible downstream of the encoder (\cref{thm:nec}, \cref{cor:imp}, V-A, V-B)}};
  \draw[->] (q1) -- node[lab]{no} (l1);
  \draw[->] (q1) -- node[lab]{yes} (q2);
  \draw[->] (q2) -- node[lab]{no} (l2);
  \draw[->] (q2) -- node[lab]{yes} (q3);
  \draw[->] (q3) -- node[lab]{no} (l3);
  \draw[->] (q3) |- node[lab, pos=0.2]{yes} (l4);
\end{tikzpicture}
\caption{Overview of the main results.}
\label{fig:triage}
\end{figure}

\section{Impossibility of Transfer}\label{sec:limit}

\subsection{Unrecoverable directions}\label{sec:condition}
We first isolate the kind of discarded direction that is unrecoverable.

\begin{definition}[Unboundedly safety-relevant]\label{def:unbdd}
The discarded information is \emph{safety-relevant} on the fiber over $z$ if $h$ is not constant on $E^{-1}(z) \cap R$, and \emph{unboundedly safety-relevant} on the fiber over $z$ if $h$ is unbounded below on its reachable part,
\[
  \inf_{x\in E^{-1}(z)\cap R}h(x)=-\infty ,
\]
so that no finite value under-approximates true safety on the reachable states the encoder collapses to $z$.
\end{definition}

Three properties together rule out every certificate the representation could support.
We prove this by exhibiting one such system, so the failure is a structural possibility, not an artifact of a poor encoder or an inaccurate model.

\begin{proposition}[Impossibility of transfer]\label{thm:nec}
There is a system with a bounded set of initial conditions, whose discarded direction is
\begin{enumerate}[label=\emph{(\roman*)},leftmargin=*,nosep]
\item unobservable from $z$,
\item unstable,
\item with true safety unbounded below along the reachable part of the fiber.
\end{enumerate}
For this system no certificate in the sense of \cref{def:cert} exists.
A trajectory with constant $z$ has $h(x(t))\to-\infty$ while remaining indistinguishable in $(z,u)$ from a safe one, so $\phi(z(\cdot))$ holds one value on both and cannot lower-bound $h$.
\end{proposition}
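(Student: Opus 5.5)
We instantiate the motivating family \eqref{eq:motivating-family} with $c=0$ and $a<0$ (say $a=-1$, $b=0$), encoder $E(x)=x_1$, the zero policy $\pi\equiv0$, and true safety $h(x)=(1-x_1)-\ell x_2$ with $\ell>0$. The set of initial conditions is the bounded square $X_0=[-\tfrac12,\tfrac12]\times[0,1]$. The closed loop is then $\dot x_1=0$, $\dot x_2=x_2$, so $x_1(t)=x_1(0)$ and $x_2(t)=x_2(0)e^{t}$. The vector field is linear, hence locally Lipschitz and forward complete, and $E$ is smooth, so the standing assumptions of \cref{sec:setup} hold. The latent model $f_z(z,u)=u$ is exact, $\gamma=0$. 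This removes any model-error explanation for the failure.

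First we verify the three properties. For (i), take any two initial states with the same $x_1(0)$. They produce identical $z(t)=x_1(0)$ and $u(t)=0$ for all $t\ge0$, so they are indistinguishable in the sense of \cref{def:indist}. For (ii), the $x_2$-dynamics have eigenvalue $1>0$. For (iii), fix $z_0\in[-\tfrac12,\tfrac12]$. The reachable part of the fiber $E^{-1}(z_0)\cap R$ contains $\{(z_0,s e^{t}): s\in(0,1],\,t\ge0\}=\{z_0\}\times(0,\infty)$. On this set $h=(1-z_0)-\ell x_2$ is unbounded below, so the fiber is unboundedly safety-relevant in the sense of \cref{def:unbdd}.

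Next we show nonexistence. Suppose $\phi$ certifies safety per \cref{def:cert}. Then, as in \eqref{eq:fiber-inf}, for every reachable $x$ on the fiber over $z_0$ we need $\phi(z_0)\le h(x)$. So $\phi(z_0)\le\inf_{E^{-1}(z_0)\cap R}h=-\infty$. This is impossible for a real-valued $\phi$, so no certificate exists. A single reachable fiber already suffices for this contradiction.

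The trajectory statement follows by comparing two initial states. Start at $x(0)=(0,1)$: then $z(t)\equiv0$ and $h(x(t))=1-\ell e^{t}\to-\infty$. Start at $x'(0)=(0,0)$: this is an equilibrium with $h\equiv1>0$. Both trajectories have the same $(z,u)=(0,0)$ history, so $\phi(z(\cdot))\equiv\phi(0)$ takes one value on both. Since $h$ along the first is eventually below any number, $\phi(0)$ cannot lower-bound it.

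The main subtlety is making sure the initial set is bounded while the reachable fiber is not. This is exactly where instability (ii) enters: $R$ is the forward orbit of a bounded set. We must check that $R$ is taken under the closed loop, so that the fiber's reachable part really is the full ray. The rest is routine.
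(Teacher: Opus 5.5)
Your proof is correct and follows the paper's construction. Both use the motivating family with $c=0$, $a<0$, $\pi\equiv0$ and $h=(1-x_1)-\ell x_2$, together with the pair of indistinguishable trajectories on the fiber over $z=0$, one starting at the safe equilibrium and one with $x_2(0)>0$ that diverges. Your choice $a=-1$, $b=0$, the explicit bounded square of initial states, and deducing nonexistence directly from the fiber infimum \eqref{eq:fiber-inf} only make explicit what the paper's proof leaves implicit.
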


\begin{proof}
Take the system \eqref{eq:motivating-family} with $a<0$, $c=0$, safety $h=(1-x_1)-\ell\,x_2$ for a constant $\ell>0$, and the feedback $\pi\equiv0$, so $z=x_1$ is constant.
The discarded direction is the $x_2$-axis: unobservable, since $x_2$ never enters $\dot z=u$, and unstable, since $a<0$.
Take $x_1(0)=0$: the state $x_2^{(1)}(0)=0$ gives $h^{(1)}\equiv1$, safe for all time, while $x_2^{(2)}(0)=g_0>0$ gives $x_2^{(2)}(t)=g_0e^{-at}\to+\infty$ and $h^{(2)}(t)=1-\ell g_0e^{-at}\to-\infty$, with $z(\cdot)\equiv0$ on both.
So $h$ is unbounded below on $E^{-1}(0)\cap R$, giving \emph{(iii)}, and the two trajectories are indistinguishable, since the loop closes in $x_1$ alone.
Any certificate must satisfy $\bar\phi\le h$ along both, but $\phi(z(t))$ takes the same bounded value on the two, so it cannot lower-bound $h^{(2)}(t)\to-\infty$.
\end{proof}

To see that the third condition cannot be dropped, we remove it.
Replace $h$ by $(1-x_1)-\ell\tanh(x_2)$, bounded along the fiber.
The surrogate $\phi'(z)=(1-z)-\ell$ now satisfies $\bar\phi'\le h$, since $\tanh\le1$, so holding $\phi'\ge0$, that is $z\le1-\ell$, secures $h\ge0$.
A certificate exists even though the hidden direction is still unobservable and unstable.
Non-detectability makes failure possible, and unavoidable only when the direction is also unbounded below along the fiber.

\subsection{Invisibility to runtime monitors}\label{sec:detect}

In the system of \cref{thm:nec} the representation-only model $f_z(z,u)=u$ reproduces $\dot z$ exactly, so $\gamma=0$ and the latent CBF $\beta(z)=1-z$ satisfies the latent barrier inequality for every rate $\alpha$.
The transfer guarantee of \cite{lutkus2025latent} is not violated here, since its hypotheses are not met.
It requires the certified set to lie inside the induced safe set $\mathcal S_z$ of \cref{sec:safety}, and here every fiber holds an unsafe state, so $\mathcal S_z=\emptyset$ and the inclusion fails.
The missing hypothesis is the fiber gap this paper makes quantitative.
So $\beta=1-z$ is a valid CBF but not a certificate in the sense of \cref{def:cert}, since its lift overstates true safety, $\beta(E(x))-h(x)=\ell\,x_2>0$ for $x_2>0$, and \cref{thm:nec} says that no other choice of latent function repairs this.
No better monitor recovers what $\gamma$ misses, as the next corollary shows.

\begin{corollary}[Runtime impossibility]\label{cor:imp}
No causal function of $(z,u)_{[0,t]}$ can separate the failure of \cref{thm:nec}.
In particular, the residual is blind to it.
\end{corollary}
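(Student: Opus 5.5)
The argument reuses the two trajectories built in the proof of \cref{thm:nec}. In that system, with $a<0$, $c=0$, $\pi\equiv0$ and $h=(1-x_1)-\ell x_2$, the initial states $(0,0)$ and $(0,g_0)$ generate signals $z^{(1)}(\cdot)\equiv z^{(2)}(\cdot)\equiv0$ and $u^{(1)}(\cdot)\equiv u^{(2)}(\cdot)\equiv0$. Their safety, however, splits: $h^{(1)}\equiv1$, while $h^{(2)}(t)=1-\ell g_0e^{-at}$ crosses zero at the finite time $t^\ast=\ln(1/(\ell g_0))/(-a)$ when $\ell g_0<1$, and at $t^\ast=0$ otherwise. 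After $t^\ast$ the second trajectory stays unsafe and diverges. Any separating monitor would therefore have to output different verdicts on two inputs that are literally identical.

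The formal step is as follows. Let a causal monitor be any family of maps $M_t$ that, for each $t\ge0$, takes the restriction $(z,u)_{[0,t]}$ to a verdict (or to any value). Causality means only that $M_t$ depends on the history up to $t$. For every $t$ the two histories $(z^{(1)},u^{(1)})_{[0,t]}$ and $(z^{(2)},u^{(2)})_{[0,t]}$ coincide as functions on $[0,t]$. Hence $M_t$ takes the same value on both, so for $t>t^\ast$ it cannot report ``safe'' on the first trajectory and ``unsafe'' on the second. The argument uses nothing about measurability or continuity of $M_t$. It does not even need causality, since the whole histories on $[0,\infty)$ already agree; causality only matters because it is the realistic class of monitors. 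The statement ``cannot separate'' can also be read as ``no function of the history is a lower bound on $h$ that tracks its sign.'' In that reading I would add that any real-valued $M_t$ that lower-bounds $h$ on the second trajectory must be $\le h^{(2)}(t)\to-\infty$. The same value is then forced on the safe trajectory, so the monitor would flag a safe state as arbitrarily unsafe. This is the runtime analogue of the final sentence of \cref{thm:nec}.

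For the residual, I would first note that it is itself a causal function of $(z,u)$. For the exact model $f_z(z,u)=u$ we have $\partial E(x)F(x)=\dot x_1=u$, so $r(t)\equiv0$ on both trajectories. The same holds for any model $f_z$: $r(t)=\dot z(t)-f_z(z(t),u(t))$ is determined by the common signals, so it is identical on the two trajectories and the general claim applies. I expect no real obstacle here. The only care needed is to state precisely what ``separate'' means, namely distinct outputs at some time after the failure, and to check that $r$ is expressible through $(z,u)$ alone. That holds because $\dot z$ is the derivative of the observed signal.
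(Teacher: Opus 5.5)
Your proposal is correct and follows the paper's proof. Both trajectories of \cref{thm:nec} generate identical $(z,u)$ histories, so any function of those histories takes one value on the safe run and the unsafe run, and the residual is covered as a special case. The paper uses only the exact model $f_z(z,u)=u$ with $r\equiv0$; your remark that $r(t)=\dot z(t)-f_z(z(t),u(t))$ coincides on the two runs for every latent model is a valid and slightly more general phrasing of that last step.
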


\begin{proof}
The safe and unsafe trajectories of \cref{thm:nec} produce identical $(z,u)_{[0,t]}$, so any causal function of $(z,u)$ takes the same value on both.
Since one is safe and the other has $h\to-\infty$, no such function separates them.
The residual is the special case $f_z(z,u)=u$, which reproduces $\dot z=u$ exactly, giving $r\equiv0$ and $\gamma=0$ throughout.
\end{proof}

The conjugacy error sees a hidden direction only as far as the representation does, so the gap can only be closed by changing what the encoder keeps.

\section{Transfer Guarantees}\label{sec:transfer}

Having shown when transfer is impossible, we now ask what a latent certificate that does carry over guarantees about the true system.
The guarantee is determined by the model error $\gamma$ and by the width of the fibers, where \cref{sec:limit} located the failure.
The model-error cost is known from \cite{lutkus2025latent} and \cref{sec:transfer-inv} restates it in our notation.
The fiber cost is left implicit in prior analyses, and \cref{sec:transfer-fiber} quantifies it.
Throughout this section the latent barrier has a bounded gradient over the reachable representations, $L:=\sup_{z\in E(R)}\|\nabla\phi(z)\|<\infty$, which holds in particular whenever $E(R)$ is bounded.

\subsection{The model-error margin}\label{sec:transfer-inv}

We take the latent certificate to be a control barrier function \cite{ames2019cbf} and $\pi$ a policy that enforces it on the latent model, so $\langle \nabla\phi(z), f_z(z,\pi(z))\rangle \ge -\alpha\,\phi(z)$
on $E(R)$ for a constant rate $\alpha > 0$.
Differentiating the lift $\bar\phi$ along the true dynamics and splitting off the residual of \cref{sec:safety} gives, with the gradient bound $L$,
\begin{equation}\label{eq:liftbarrier}
\tfrac{d}{dt}\,\bar\phi(x) \;\ge\; -\alpha\,\bar\phi(x) - L\gamma.
\end{equation}

The slack $L\gamma$ acts as a disturbance on safety, as in input-to-state safety \cite{kolathaya2019iss}, and it costs a margin.
Writing $\sigma_\gamma:=L\gamma/\alpha$, the set $\{x\in R:\bar\phi(x)\ge-\sigma_\gamma\}$ is forward invariant under the true dynamics, since $\tfrac{d}{dt}\bar\phi\ge0$ on its boundary.
The inequality and the invariant set are the barrier-transfer lemma and safety theorem of \cite[Lem.~3, Thm.~3]{lutkus2025latent}.
Their analysis stops at $\sigma_\gamma$ and absorbs the fiber gap into the worst-case induced set $(E(\mathcal{S}_x^c))^c$ \cite[Footnote~2]{lutkus2025latent}.

\subsection{The fiber-defect margin}\label{sec:transfer-fiber}

\Cref{sec:transfer-inv} bounds the lift, and the lift takes one value on each fiber.
True safety varies across the fiber, so the state can sit anywhere on it while $\bar\phi$ reports the same number.
A guarantee on $\bar\phi$ is therefore not yet a guarantee on $h$.
The gap is how far $h$ can fall across a fiber while $\bar\phi$ holds fixed, and we measure it directly in the values of $h$.
Fix a reference state $x_z\in E^{-1}(z)\cap R$ on each reachable fiber.
The fiber defect
\[
  D(z)\;:=\;\sup_{x\in E^{-1}(z)\cap R}\bigl(h(x_z)-h(x)\bigr)\;\ge\;0
\]
is the largest amount by which true safety on the reachable part of the fiber drops below its value at the reference.
The defect is one-sided because states safer than the reference are harmless, and it is finite exactly when $h$ is bounded below along the reachable part of the fiber.
An infinite defect is the unbounded safety-relevance of \cref{def:unbdd}.

\Cref{thm:transfer} turns the invariance of \cref{sec:transfer-inv} into a floor on true safety.
The cost is the defect on the current fiber, and its worst value over the reachable fibers makes the floor uniform.

\begin{theorem}[Transfer bound]\label{thm:transfer}
Suppose
\begin{enumerate}[label=\emph{(\roman*)},leftmargin=*,nosep]
\item $\phi$ satisfies the barrier inequality at rate $\alpha$, the gradient bound $L$ holds on $E(R)$, and $f_z$ is $\gamma$-forward-conjugate, so that $\{\bar\phi\ge-\sigma_\gamma\}$ is forward invariant with $\sigma_\gamma=L\gamma/\alpha$
\item the fiber defect is uniformly bounded on reachable fibers, $\bar D:=\sup_{z\in E(R)}D(z)<\infty$
\item the latent barrier under-approximates true safety at the reference state of each reachable fiber, $\phi(z)\le h(x_z)$.
\end{enumerate}
Then any trajectory starting in $\{\bar\phi\ge-\sigma_\gamma\}$ stays in it, and for all $t$,
\[
  h\bigl(x(t)\bigr)\;\ge\;\bar\phi\bigl(x(t)\bigr)-D\bigl(z(t)\bigr)\;\ge\;-\bigl(\sigma_\gamma+\bar D\bigr)
\]
\end{theorem}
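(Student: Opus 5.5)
The argument splits into an invariance half and a pointwise half, and both are short once the two margins are separated.

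\emph{Invariance.} This is hypothesis \emph{(i)}, which restates \cref{sec:transfer-inv}. For completeness, I would rederive it from \eqref{eq:liftbarrier}. Along any trajectory in $R$, write $\bar\phi(t):=\phi(z(t))$. This function is absolutely continuous, since $E$ is $C^1$ and the trajectories are $C^1$. Adding and subtracting $f_z$ in the chain rule gives
\[
\tfrac{d}{dt}\bar\phi=\langle\nabla\phi,f_z\rangle+\langle\nabla\phi,r\rangle\ge-\alpha\bar\phi-L\gamma .
\]
The comparison lemma then yields $\bar\phi(t)+\sigma_\gamma\ge e^{-\alpha t}\bigl(\bar\phi(0)+\sigma_\gamma\bigr)$. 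So a trajectory that starts with $\bar\phi(0)\ge-\sigma_\gamma$ has $\bar\phi(t)\ge-\sigma_\gamma$ for all $t\ge0$. Using the explicit comparison bound instead of a boundary-derivative argument avoids the usual subtlety about tangency on $\{\bar\phi=-\sigma_\gamma\}$. It needs only that the trajectory stays in $R$, which holds by forward invariance of $R$, and that $z(t)\in E(R)$, where the bounds $L$ and $\gamma$ apply.

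\emph{Pointwise floor.} Fix $t$ and set $z=z(t)$, $x=x(t)\in E^{-1}(z)\cap R$. By the definition of the fiber defect, $h(x_z)-h(x)\le D(z)$, so $h(x)\ge h(x_z)-D(z)$. By hypothesis \emph{(iii)}, $h(x_z)\ge\phi(z)=\bar\phi(x)$. Chaining the two gives
\[
h(x(t))\ge\bar\phi(x(t))-D(z(t)).
\]
For the second inequality, I would use $\bar\phi(x(t))\ge-\sigma_\gamma$ from the invariance step and $D(z(t))\le\bar D$ from hypothesis \emph{(ii)}, since $z(t)\in E(R)$. Together these give $h(x(t))\ge-(\sigma_\gamma+\bar D)$.

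\emph{Main obstacle.} The difficulty is one of bookkeeping rather than depth. The reference state $x_z$ and the defect are defined only on reachable fibers, so every point at which I evaluate them must lie in $R$ and in $E(R)$. The barrier inequality and the gradient bound are likewise assumed only on $E(R)$. Forward invariance of $R$ supplies all of these memberships, and I would state it explicitly at each use. I would also note that $D(z)\ge0$ holds because $x_z$ itself is among the states in the supremum, so the first inequality never improves on $\bar\phi$.
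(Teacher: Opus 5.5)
Your proposal is correct and matches the paper's proof: invariance comes from hypothesis \emph{(i)}, that is, from \eqref{eq:liftbarrier}, and the floor comes from chaining $h(x(t))\ge h(x_z)-D(z)$ with the calibration $h(x_z)\ge\phi(z)=\bar\phi(x(t))$, then bounding by $-\sigma_\gamma$ and $\bar D$. Your explicit comparison bound $\bar\phi(t)+\sigma_\gamma\ge e^{-\alpha t}\bigl(\bar\phi(0)+\sigma_\gamma\bigr)$ is a slightly more careful form of the boundary-derivative argument in \cref{sec:transfer-inv}, though hypothesis \emph{(i)} already grants that invariance.
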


\begin{proof}
Invariance of $\{\bar\phi\ge-\sigma_\gamma\}$ is the conclusion of \eqref{eq:liftbarrier}, so $\bar\phi(x(t))\ge-\sigma_\gamma$ for all $t$.
Fix $t$ and set $z=E(x(t))$.
The state $x(t)$ lies on the fiber over $z$ and inside $R$, so $h(x(t))\ge h(x_z)-D(z)$ by the definition of the defect.
By the calibration, $h(x_z)\ge\phi(z)=\bar\phi(x(t))$. Combining the three,
\[
  h(x(t))\;\ge\;\bar\phi(x(t))-D(z(t))\;\ge\;-(\sigma_\gamma+\bar D). \qedhere
\]
\end{proof}

We call a latent barrier that satisfies \emph{(i)} and \emph{(iii)} a \emph{calibrated barrier}.
The calibration $\phi(z)\le h(x_z)$ is not automatic, and without it the floor fails, so it is a condition the designer must enforce.
The defect needs no such care, since a fiber on which $h$ is constant contributes $D(z)=0$ and costs nothing.
With exact calibration $\phi(z)=h(x_z)$ the pointwise floor equals the right-hand side of \eqref{eq:fiber-inf} for every choice of reference, so it is the tightest value a certificate can take.
The decomposition adds two things to \eqref{eq:fiber-inf}.
For every choice of reference the defect is at most the variation of $h$ on the reachable fiber, $\sup h-\inf h$ over $E^{-1}(z)\cap R$, which depends on the encoder, the safety function and the reachable set alone, so it compares encoders before a barrier is chosen.
It can also be bounded from the dynamics without computing the infimum on each fiber (\cref{lem:stable}, \cref{prop:linear}).

\begin{remark}\label{rem:guarantee}
The floor is negative, so \cref{thm:transfer} is a robustness statement, not yet a safety guarantee.
It becomes one when the margin is built into the barrier the filter enforces. If the shifted barrier $\phi - (\sigma_\gamma + \bar D)$ satisfies the barrier inequality in its own right, \cref{sec:transfer-inv} applied to it, with the same gradient bound $L$, makes $\{\bar\phi - \sigma_\gamma - \bar D \ge -\sigma_\gamma\} = \{\bar\phi \ge \bar D\}$ forward invariant, and on it the floor reads $h \ge 0$.
The shift is not free, since the barrier inequality must be re-verified at the new boundary.
\end{remark}

\subsection{Bounding the defect}\label{sec:bounding}

The transfer bound is useful only when the fiber defect is finite, and \cref{thm:nec} showed that an unstable, unobservable direction with $h$ unbounded below drives it to infinity on a reachable fiber.
Hypotheses \emph{(ii)}--\emph{(iii)} below say the vertical subsystem, the dynamics of the discarded coordinates, is input-to-state stable with the observed component as input, which corresponds to output-to-state stability, the nonlinear notion of detectability of \cite{sontag1997output}.

Two coarser constants bound the defect from above.
The first is the along-fiber Lipschitz constant $L_V$ of $h$, the largest rate at which $h$ changes between states sharing a representation.
The along-fiber directions are the vertical subspace $\ker\partial E$ of the horizontal--vertical splitting of the tangent space \cite{friedl2026reduced}, the foliation structure underlying reduced-order models \cite{szalai2023data}.
The second is the fiber width $\rho$, the largest distance between two states on one fiber within the reachable set $R$.
Whenever both are finite, $D(z)\le L_V\rho$.
The product charges the metric width of the fiber where only the variation of $h$ matters, and a bounded $h$ on an unbounded fiber has $\rho=\infty$ yet a finite defect.

The coordinates $x=(z,v)$ are an assumption on the encoder.
Where $\partial E$ has full rank the submersion theorem straightens $E$ into the projection $(z,v)\mapsto z$ only locally, and \cref{lem:stable} assumes one such chart on the whole region of interest, as holds for a linear encoder.
The fibers of a general smooth encoder can be disconnected, which does not affect \cref{thm:transfer}, since the defect is taken over the whole reachable fiber.
Both $\rho$ and $L_V$ are measured in these coordinates.

\begin{lemma}[Contracting fibers]\label{lem:stable}
Suppose coordinates $x=(z,v)$ in which $E(x)=z$, so the fibers are the sets $\{z=\mathrm{const}\}$ and the vertical dynamics of the closed loop are $\dot v=f_v(z,v,u)$ with $u=\pi(z)$.
Assume, for all $z \in E(R)$, $u = \pi(z)$, and all $v, v'$,
\begin{enumerate}[label=\emph{(\roman*)},leftmargin=*,nosep]
\item the initial vertical components are bounded, $\|v(0)\|\le V_{0}$, and $h$ is $L_V$-Lipschitz along fibers
\item the vertical dynamics contract at a uniform rate $a>0$, $(v-v')^{\top}\bigl(f_v(z,v,u)-f_v(z,v',u)\bigr)\le-a\,\|v-v'\|^{2}$
\item the drive is bounded, $\|f_v(z,0,u)\|\le\kappa$
\end{enumerate}
Then every reachable state satisfies $\|v\|\le V_{0}+\kappa/a$, the fiber width within $R$ is finite, $\rho\le2\,(V_{0}+\kappa/a)$, and
\[
  \bar D\;\le\;L_V\rho\;<\;\infty ,
\]
so hypothesis \emph{(ii)} of \cref{thm:transfer} holds.
\end{lemma}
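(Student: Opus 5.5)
The plan is to prove the bound on $\|v(t)\|$ with a comparison argument on $\|v\|^2$, and then read off the width and defect bounds from it. Fix a reachable trajectory. Its representation $z(t)$ lies in $E(R)$ for all $t$, and the input is $u(t)=\pi(z(t))$. Consider $W(t)=\tfrac12\|v(t)\|^2$. Along the vertical dynamics,
\[
  \dot W = v^\top f_v(z,v,u) = v^\top\bigl(f_v(z,v,u)-f_v(z,0,u)\bigr) + v^\top f_v(z,0,u).
\]
Hypothesis \emph{(ii)} with $v'=0$ bounds the first term by $-a\|v\|^2$. Cauchy--Schwarz together with \emph{(iii)} bounds the second by $\kappa\|v\|$. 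So $\dot W\le -a\|v\|^2+\kappa\|v\|$. Whenever $\|v\|>\kappa/a$ this is strictly negative, so $\|v\|$ cannot increase there. The constraint on $u$ is respected throughout, because the trajectory's input is always $\pi(z)$ with $z\in E(R)$.

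The main technical point is turning this differential inequality into the clean bound $\|v(t)\|\le \max\{\|v(0)\|,\kappa/a\}\le V_0+\kappa/a$. I would avoid differentiating $\|v\|$ at $v=0$ and argue by contradiction on the ball instead. Set $M=\max\{V_0,\kappa/a\}$ and suppose $\|v(t_1)\|>M$ for some $t_1$. Let $t_0$ be the last time before $t_1$ with $\|v\|\le M$. On $(t_0,t_1]$ we have $\|v\|>\kappa/a$, hence $\dot W<0$, so $W(t_1)<W(t_0)$. This contradicts $\|v(t_1)\|>M\ge\|v(t_0)\|$. Alternatively, Grönwall applied to $\tfrac{d}{dt}\|v\|\le -a\|v\|+\kappa$ wherever $v\neq0$ gives $\|v(t)\|\le e^{-at}V_0+(1-e^{-at})\kappa/a\le V_0+\kappa/a$ directly. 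Forward completeness, assumed in \cref{sec:setup}, guarantees that the trajectory exists for all $t\ge0$. Since $R$ consists exactly of the states on such trajectories from admissible initial conditions, every reachable state satisfies $\|v\|\le V_0+\kappa/a$.

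The remaining claims follow directly. Two states on the same fiber within $R$ share $z$ and differ only in $v$. Their distance in these coordinates is therefore $\|v-v'\|\le\|v\|+\|v'\|\le 2(V_0+\kappa/a)$, which gives the bound on $\rho$. For the defect, fix $z\in E(R)$ and take any $x=(z,v)\in E^{-1}(z)\cap R$. The reference state $x_z=(z,v_z)$ lies on the same fiber, so the along-fiber Lipschitz bound gives
\[
  h(x_z)-h(x)\le L_V\|v_z-v\|\le L_V\rho.
\]
Taking the supremum over $x$ gives $D(z)\le L_V\rho$, and taking the supremum over $z\in E(R)$ gives $\bar D\le L_V\rho<\infty$. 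This is hypothesis \emph{(ii)} of \cref{thm:transfer}.
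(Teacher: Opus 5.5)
Your proof is correct and follows the paper's. The shared steps are the estimate $\tfrac{d}{dt}\tfrac12\|v\|^2\le -a\|v\|^2+\kappa\|v\|$ from (ii) with $v'=0$ and (iii), then the same width argument and along-fiber Lipschitz argument for $\rho$ and $\bar D$. The only difference is that you replace the paper's bare appeal to the comparison lemma with an explicit last-exit argument, which avoids differentiating $\|v\|$ at $v=0$ and gives the slightly sharper bound $\|v(t)\|\le\max\{V_0,\kappa/a\}$.
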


\begin{proof}
Along any closed-loop trajectory, \emph{(ii)} with $v'=0$ and \emph{(iii)} with Cauchy--Schwarz give $\tfrac{d}{dt}\tfrac12\|v\|^{2}=v^{\top}f_v(z,v,u)\le-a\,\|v\|^{2}+\kappa\,\|v\|$, hence by the comparison lemma $\|v(t)\|\le e^{-at}\|v(0)\|+\kappa/a\le V_{0}+\kappa/a$.
Two reachable states on one fiber share $z$ and differ only in $v$, so their distance is at most $2\,(V_{0}+\kappa/a)$, and the Lipschitz bound along fibers converts the radius into the defect, $D(z)\le L_V\rho$ on every reachable fiber.
\end{proof}

\subsection{The linear case}\label{sec:lti}

The impossibility of \cref{thm:nec} and the contracting sufficiency of \cref{lem:stable} leave a gap.
A direction invisible to the encoder now may become visible later through the dynamics.
For linear systems $\dot x=Ax+Bu$ with $z=Cx$ the gap closes.
The unobservable subspace $N=\bigcap_{k}\ker(CA^{k})$ is $A$-invariant, so a direction in $N$ stays unobservable for all time.
On $N$ the controller plays no role, since states sharing a representation receive the same input, and their difference evolves open loop, $\dot d=Ad$.
Only $C$, $A$, and the safety gradient therefore matter for a certificate to survive the encoder.

\begin{definition}[Safety-detectability]\label{def:safety-detect}
Write $(C,A)$ in observability form, $x=(\xi,\eta)$ with $z=C_{1}\xi$ and $\eta$ coordinates on $N$. Let the unobservable block $A_{\bar o}$ have no eigenvalues on the imaginary axis, so that $\eta$ splits into modal coordinates $(\eta_s,\eta_u)$ with $A_s$ Hurwitz and $A_u$ anti-Hurwitz, and $N_u\subseteq N$ is the subspace of the $\eta_u$ coordinates.

Let $h$ be affine with gradient $w=(w_\xi,w_s,w_u)$ in these coordinates.
The safety function is \emph{safety-detectable} through the encoder if $w_u=0$.
Every mode the encoder cannot see and the dynamics do not contract is then one that true safety does not depend on.

\end{definition}

When $A_{\bar o}$ has no eigenvalues on the imaginary axis, safety-detectability corresponds to functional detectability of the safety output $y_h=w^\top x$ from the measured output $z$ \cite{fernando2010functional}.
Checking it takes one observability decomposition and one orthogonality test, $w^{\top}$ applied to a basis of $N_u$, which is available before a policy or a barrier is trained.

\begin{proposition}[Linear characterization]\label{prop:linear}
Let $\dot x=Ax+Bu$ with $z=Cx$ and $u=\pi(z)$ for a locally Lipschitz static $\pi$. Let $h$ be affine, let $A_{\bar o}$ have no eigenvalues on the imaginary axis, and suppose the observable component of the state is bounded on $R$ and the initial set is bounded.
\begin{enumerate}
\item[(i)] If $h$ is safety-detectable, then $\bar D<\infty$ for every choice of reference states, and any latent barrier satisfying the conditions of \cref{thm:transfer} transfers with floor $-(\sigma_\gamma+\bar D)$.
\item[(ii)] If $h$ is not safety-detectable, and the initial set contains a ball around a point whose trajectory keeps $h$ bounded above, then no continuous $\phi$ satisfies \cref{def:cert}, and no continuous $\phi$ calibrated as in \cref{thm:transfer} has $\bar D<\infty$.
\end{enumerate}
\end{proposition}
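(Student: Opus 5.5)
The plan is to work throughout in the observability-form coordinates $x=(\xi,\eta_s,\eta_u)$ of \cref{def:safety-detect}. In these coordinates $A$ is block triangular: $\dot\xi$ depends only on $(\xi,u)$, and $(\eta_s,\eta_u)$ are driven by $\xi$ and $u$ but not by each other, since modal coordinates block-diagonalize $A_{\bar o}$. Two states on one fiber share $z=C_1\xi$ and hence, along the closed loop, the same input $u=\pi(z)$. This holds even when $C_1$ is not injective and they differ in $\xi$. The whole argument rests on one identity: for $x,x'$ on a common fiber,
\[
h(x)-h(x')=w_\xi^\top(\xi-\xi')+w_s^\top(\eta_s-\eta_s')+w_u^\top(\eta_u-\eta_u').
\]
Part (i) bounds each term on $R$, and part (ii) makes the last term blow up.

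\textbf{Part (i).} With $w_u=0$ the last term vanishes, and I will bound the other two uniformly on $R$.
\begin{enumerate}
\item[(a)] The $\xi$-term is bounded because the observable component is bounded on $R$ by hypothesis.
\item[(b)] Then $z=C_1\xi$ ranges over a bounded set. Since $\pi$ is locally Lipschitz, hence continuous on the compact closure of that set, $u$ is bounded on $R$.
\item[(c)] The stable block satisfies $\dot\eta_s=A_s\eta_s+G_s\xi+B_su$ with $A_s$ Hurwitz, a bounded drive, and bounded initial data. The variation-of-constants bound $\|\eta_s(t)\|\le Me^{-\lambda t}\|\eta_s(0)\|+(M/\lambda)\sup\|G_s\xi+B_su\|$ gives a uniform bound on $\eta_s$ over $R$. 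This is the linear instance of \cref{lem:stable}.
\end{enumerate}
Hence $\sup h-\inf h$ over each reachable fiber is bounded by a constant independent of $z$. Since $D(z)$ is at most this variation for any reference choice, $\bar D<\infty$, and \cref{thm:transfer} yields the floor $-(\sigma_\gamma+\bar D)$.

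\textbf{Part (ii).} Let $x_0$ be the centre of the ball $B_\varepsilon(x_0)$ in the initial set, with trajectory $x^0(t)$ satisfying $h(x^0(t))\le M$. For a perturbation $d$ supported on the $\eta_u$ coordinates with $\|d\|<\varepsilon$, set $x^d(0)=x_0+d$.
\begin{enumerate}
\item[(a)] Because $N_u$ is $A$-invariant and $Cd=0$, the difference $e^{At}d$ stays in $N_u$ with $C e^{At}d=0$. Therefore $x^d$ and $x^0$ produce the same $z(\cdot)$ and the same $u(\cdot)$, so they are indistinguishable and $x^d(t)=x^0(t)+e^{At}d$.
\item[(b)] By the identity, $h(x^d(t))=h(x^0(t))+w_u^\top e^{A_ut}d_u\le M+w_u^\top e^{A_ut}d_u$.
\item[(c)] Since $A_u$ is anti-Hurwitz and $w_u\ne0$, $\|e^{A_u^\top t}w_u\|\ge c\,e^{\mu t}\|w_u\|\to\infty$ for some $c,\mu>0$. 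For a sequence $t_k\to\infty$ I choose a fresh perturbation at each time,
\[
d_{u,k}=-\tfrac{\varepsilon}{2}\,e^{A_u^\top t_k}w_u/\|e^{A_u^\top t_k}w_u\|,
\]
which gives $h(x^{d_k}(t_k))\le M-\tfrac{\varepsilon}{2}\|e^{A_u^\top t_k}w_u\|\to-\infty$.
\item[(d)] Each $x^{d_k}(t_k)$ is reachable and lies on the fiber over $z_k:=z^0(t_k)$, and the $z_k$ stay in a bounded set.
\end{enumerate}
Two conclusions follow. Any continuous $\phi$ is bounded below on the compact closure of $\{z_k\}$, so $\phi(z_k)\le h(x^{d_k}(t_k))\to-\infty$ is impossible, and no certificate exists. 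For a calibrated $\phi$ and any reference choice, $D(z_k)\ge h(x_{z_k})-h(x^{d_k}(t_k))\ge\phi(z_k)-h(x^{d_k}(t_k))\to\infty$, so $\bar D=\infty$.

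The main obstacle is step (c) of part (ii). A single fixed perturbation $d_u$ need not make $w_u^\top e^{A_ut}d_u$ diverge, because the growth could be orthogonal to $w_u$. I avoid this by taking a fresh perturbation for each $t_k$ and using the dual growth $e^{A_u^\top t}w_u$. As a result the divergence is spread across the sequence of fibers $z_k$, not concentrated on one fiber. Boundedness of the observable component is what turns this sequence into a contradiction with continuity of $\phi$.
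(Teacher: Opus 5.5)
Your proof is correct and follows the paper's argument. Part (i) bounds the $\xi$-, $u$- and $\eta_s$-components on $R$ exactly as the paper does, and part (ii) uses companion trajectories from $x_0+d$ with $d\in N_u$ that share $z(\cdot)$ and $u(\cdot)$, contradicting boundedness of a continuous $\phi$ on a compact set of representations. The only difference is a minor technical variation: the paper fixes one $d\in N_u$ with $\sup_t|w^\top e^{At}d|=\infty$ (via a finite-dimensional operator-norm argument) and uses both companions $x(t)\pm\varepsilon e^{At}d$ to handle the sign. You instead pick a fresh perturbation aligned with $-e^{A_u^\top t_k}w_u$ at each time, which avoids that existence step at the cost of spreading the divergence over the fibers $z_k$.
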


\begin{proof}
(i) In the coordinates of \cref{def:safety-detect} the observable component $\xi$ is bounded on $R$ by hypothesis, so $z$ is bounded and $u=\pi(z)$ is bounded by continuity.
The block $A_s$ is Hurwitz and driven by $\xi$ and $u$ alone, the modal splitting decouples it from $\eta_u$, so $\eta_s$ is bounded on $R$ as well.
Two reachable states on one fiber differ by a vector whose $\xi$- and $\eta_s$-components are differences of bounded quantities and whose $\eta_u$-component is annihilated by $w$.
So $|h(x)-h(x')|$ is uniformly bounded over reachable pairs sharing a representation, every choice of reference states gives $\bar D<\infty$, and \cref{thm:transfer} yields the floor.

(ii) Suppose a certificate $\phi$ exists.
Let $x_0$ be the point of the hypothesis, $x(\cdot)$ its trajectory, $h(x(t))\le H$, and $z(\cdot)$ confined to a compact set $K$ by the boundedness of the observable component.
Since $w_u\neq0$ and $e^{-At}$ contracts $N_u$ exponentially, $\|w^{\top}e^{At}|_{N_u}\|$ grows without bound, and by finite dimensionality some $d\in N_u$ has $\sup_t|w^{\top}e^{At}d|=\infty$: were $t\mapsto w^{\top}e^{At}d$ bounded for each $d$ in a basis, the operator norm would be bounded too.
For $\varepsilon$ small enough the companions $x_{\pm}(t)=x(t)\pm\varepsilon\,e^{At}d$ are closed-loop trajectories from $x_0\pm\varepsilon d$ in the initial set: $N$ is $A$-invariant and contained in $\ker C$, so $Ce^{At}d\equiv0$, the three trajectories share the representation $z(\cdot)$ and therefore the input, and each difference obeys the open-loop dynamics.
Any certificate must then satisfy, for all $t$,
\[
  \phi\bigl(z(t)\bigr)\;\le\;\min_{\pm}\,h\bigl(x_{\pm}(t)\bigr)
  \;=\;h\bigl(x(t)\bigr)-\varepsilon\,\bigl|w^{\top}e^{At}d\bigr|,
\]
which is at most $H-\varepsilon\,|w^{\top}e^{At}d|$ and hence unbounded below along a sequence of times, while a continuous $\phi$ is bounded on the compact set $K$.
No such $\phi$ exists.
If a continuous $\phi$ calibrated at reference states had $\bar D<\infty$, the pointwise floor of \cref{thm:transfer} would make $\phi-\bar D$ a certificate, which was just excluded.
\end{proof}

The failure in (ii) lies in the encoder and not in the policy.
A causal policy $u=\pi(z_{[0,t]})$ receives identical histories on the companion trajectories, so their difference still evolves open loop and the proof applies unchanged.
The condition itself is therefore the same for every causal policy of $z$, since for linear systems an input distinguishes two states only if every input does \cite{hermann1977nonlinear}, while the size of $\bar D$ and the boundedness of the observable component depend on the closed loop.
The imaginary-axis exclusion matters.
A discarded integrator grows linearly under an observed drive of nonzero mean although every autonomous solution is bounded, as the hidden mode of \cref{sec:exp-synthetic} does at $a=0$.
On the axis the spectrum no longer decides, and \cref{def:unbdd} is applied to the reachable fiber directly.
Safety-detectability is strictly weaker than detectability of $(C,A)$, which demands $N_u=0$ outright.
\Cref{lem:stable} gives a nonlinear sufficient condition for a uniformly bounded defect.
The bounded-$h$ hypothesis of (ii) is a feasibility condition, ruling out the degenerate loop whose every trajectory becomes unboundedly safe.

\section{Numerical Examples}\label{sec:experiments}

\subsection{The synthetic system}\label{sec:exp-synthetic}

We re-use system \eqref{eq:motivating-family} and fix $b = 1$, $\ell = 0.5$, the zero input $u \equiv 0$, the exact latent model and the latent CBF $\phi(z) = 1-z$.
The initial set is $x_1(0)=z_0=0.5$ and $x_2(0)\in[-0.5,0.8]$, so the entire history remains on the fiber over $z_0$ and the latent barrier reports $\phi(z_0)=0.5$ throughout.
\Cref{fig:sweep} sweeps the hidden-state stability parameter $a$.
For $a>0$ the vertical dynamics contract at rate $a$, so \cref{lem:stable} applies and \cref{thm:transfer} gives a finite transfer floor.
The steady state $x_2=bz_0/a$ lies above the initial set for $a<0.625$, so $\bar D(a)=\ell\max\{0.8,\,bz_0/a\}$ and the uniform floor diverges as $a\to0^+$.
For $a<0$ the hidden direction is unstable and \cref{thm:nec} rules out a certificate, while at $a=0$ the constant drive produces an unbounded reachable fiber directly.
The latent barrier is insensitive to this transition: it remains at $0.5$, while true worst-case safety crosses zero at $a=0.5$.
For $a=0.8$ and reference state $x_{z_0}=(z_0,0)$ the largest reachable hidden coordinate is $x_2=0.8$, so $D(z_0)=0.4$ and the pointwise floor $\bar\phi-D(z_0)=0.1$ is attained at $t=0$.

\begin{figure}[!t]
\centering
\includegraphics[width=\columnwidth]{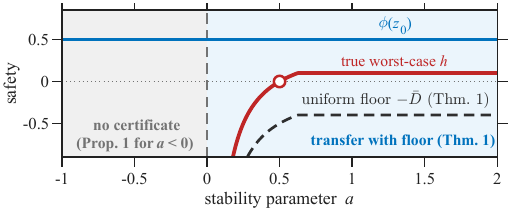}
\caption{Stability sweep on one fiber of \eqref{eq:motivating-family} with $u\equiv0$ and exact latent model: the barrier $\phi(z_0)$, the true worst-case $h$ and the uniform floor of \cref{thm:transfer}.}
\label{fig:sweep}
\end{figure}

\subsection{A cartpole encoder}\label{sec:exp-cartpole}

The plant is the standard frictionless cartpole with state $x=(p,\dot p,\theta,\dot\theta)$ and $\theta$ measured from upright.
The cart and pole masses are $0.4\,\mathrm{kg}$ and $0.1\,\mathrm{kg}$, the pole length is $0.25\,\mathrm{m}$, and the input is limited to $|u|\le5\,\mathrm{N}$.
Safety is defined by the $2\,\mathrm{m}$ track half-width, $h(x)=1-|p|/2$, and the expert controller is a full-state LQR gain ($Q=\mathrm{diag}(1,0.5,5,1)$, $R=1$).
A pole-only encoder $E_2(x)=(\theta,\dot\theta)$ puts this plant in the no-certificate case of \cref{fig:triage}.
For the frictionless dynamics $\ddot\theta$ depends on $(\theta,\dot\theta,u)$ but not on $(p,\dot p)$, so the loop closes through this representation with the pole part of the expert gain and the cart gains removed.
The cart block of the closed-loop Jacobian at upright is a double integrator, on the imaginary axis excluded by \cref{prop:linear}, but \cref{def:unbdd} applies directly: an initial state with the pole upright and a nonzero cart velocity keeps $z\equiv0$ and $u\equiv0$ while $p$ grows without bound, so $h$ is unbounded below on the reachable fiber over $z=0$ and no certificate exists.
Two runs from $(\theta,\dot\theta)(0)=(0.1\,\mathrm{rad},0)$ with hidden cart states $(p,\dot p)(0)=(0,0)$ and $(1.5\,\mathrm{m},0.5\,\mathrm{m/s})$ have identical latent and residual histories, yet their carts cross opposite track boundaries at $9.0$ and $1.8\,\mathrm{s}$.
As in \cref{cor:imp}, the failure is invisible to any causal monitor of $(z,u)$, and discarding the cart position, on which safety depends, costs nothing in latent-model accuracy.

\subsection{Learned encoders}\label{sec:exp-learned}

We now estimate the two margins of the transfer-with-floor case of \cref{fig:triage} on two learned $4$--$16$--$2$ tanh encoders of the cartpole of \cref{sec:exp-cartpole} that differ only by a safety-readout loss.
The example illustrates how the two margins can be estimated for a learned encoder and is not a certification procedure.
With samples $(x_i,\dot x_i,u_i)$ of the expert, $z_i=E(x_i)$ and $p_i$ the cart position, they minimize
\begin{equation}\label{eq:loss}
\begin{split}
J=\frac{1}{N}\sum_{i=1}^{N}\Big(&
\underbrace{\|\partial E(x_i)\dot x_i-A_z z_i-B_z u_i\|^2}_{\text{latent dynamics}}\\[-1mm]
&+w_\pi\underbrace{(K_z z_i-u_i)^2}_{\text{policy imitation}}
+w_s\underbrace{(R_z z_i-p_i)^2}_{\text{safety readout}}\Big)
\end{split}
\end{equation}
The readout asks only that a safety-relevant signal, here $p$, be recoverable from $z$.
We use $w_\pi=1$, $w_s=10$ for the safety-readout and $w_s=0$ for the plain variant, and train both on the same $40$ closed-loop runs of $8\,\mathrm{s}$ of the expert of \cref{sec:exp-cartpole} under $1\,\mathrm{N}$ exploration noise, sampled at $50\,\mathrm{Hz}$, with $30$ matched seeds per variant.
The linear maps $A_z$, $B_z$, $K_z$, $R_z$ are refit by least squares at every epoch for both variants, so for $w_s=0$ the readout is a post-hoc fit.

On $10$ held-out runs both variants imitate the expert at the $1\,\mathrm{N}^2$ floor set by the exploration noise ($0.999\pm0.022$ and $1.000\pm0.023\,\mathrm{N}^2$), with latent-dynamics losses of $0.0035\pm0.0023$ and $0.0054\pm0.0015$.
The safety-readout RMSE falls from $0.139\pm0.016\,\mathrm{m}$ to $0.0060\pm0.0015\,\mathrm{m}$ (\cref{fig:learned}(a)).
Every deployed loop is closed by the latent policy $u=K_z z$ of its encoder.
A seed is deployment-valid if its loop stays inside the track and pole-angle region from every evaluation start over the full horizon, so every deployment-valid loop is safe.
One plain encoder leaves the track and is excluded, and all $30$ safety-readout encoders are valid.

The measured latent-model error $\hat\gamma$, which determines the model-error contribution $\sigma_\gamma$, does not point to the safety-readout encoders.
Its ranges overlap, $0.23$ to $2.27$ for the plain and $0.38$ to $3.49$ for the safety-readout encoders, and it is larger on average for the latter, $1.29$ against $0.96$ (rank-sum $p=0.08$, \cref{fig:learned}(b)).
The resulting margin is $\sigma_\gamma=0.195\pm0.151$ for the plain and $0.345\pm0.169$ for the safety-readout encoders.
The calibration below holds only at the visited samples, while the defect of \cref{thm:transfer} covers the whole reachable fiber, so we charge it separately as the variation of $h$ on the fibers, which bounds the defect for every reference, and estimate it in two ways, keeping the larger as $\hat{\bar D}$.
The first is the largest drop of $h$ between visited states whose representations differ by less than $0.05$, reduced heuristically by $0.05\,L$ for this tolerance, about $0.04$ and $0.05$ at the median $\|R_z\|$ of $1.6$ and $2.1$ of the two variants. It stays inside $R$ but compares neighboring fibers and not one fiber.
The second is a descent of $h$ along one connected component of the fiber inside the bounding box of the visited states, which can miss other components and can leave $R$.
Both can understate or overstate $\bar D$, so the estimate serves diagnosis and not certification.
It does distinguish the variants, with a median falling from $0.803$ to $0.032$ and no overlap between them (\cref{fig:learned}(c)).
The first estimate alone gives a plain median of $0.71$, and its worst pairs join a visited state near the track center with one near $\pm1.5\,\mathrm{m}$ at a latent distance below $0.05$, so large variations already occur between visited states.

For each valid deployment the latent barrier is built from the safety readout, $\phi(z)=1-|R_z z|/2-\delta$, where the calibration shift $\delta$ is the smallest constant giving $\phi(E(x))\le h(x)$ at every visited sample, with gradient bound $L=\|R_z\|/2$ and rate $\alpha=4$, a design choice whose effect is reported below.
For this example we relax the smoothness of \cref{sec:safety} to the locally Lipschitz minimum of the two affine barriers $1\mp R_z z/2-\delta$: if both satisfy \eqref{eq:liftbarrier}, their minimum satisfies it almost everywhere along trajectories, which is all the comparison argument of \cref{sec:transfer-inv} uses.
At every visited sample both satisfy the barrier inequality on the learned latent model for all $30$ safety-readout encoders, with smallest margin $0.372$, while it fails for $14$ of the $29$ plain encoders, to which \cref{thm:transfer} then does not apply.

We then compare the budget of the calibrated barrier, its largest value $1-\delta$, with the empirical cost $\sigma_\gamma+\hat{\bar D}$ of \cref{thm:transfer}.
The corridor is the set $\{\phi\ge\sigma_\gamma+\hat{\bar D}\}$. With certified bounds and calibration, \cref{thm:transfer} would give $h\ge0$ there.
In the read-out cart position $\hat p=R_z z$ the corridor is $|\hat p|\le 2\,(1-\delta-\sigma_\gamma-\hat{\bar D})$.
It is nonempty when the budget exceeds the cost, and only then do we call the estimated bound non-vacuous.
The estimates $\hat\gamma$, $\hat{\bar D}$ and the calibration all come from samples, so a nonempty corridor does not certify $h\ge0$, and its invariance is not established, since the deployed policy is not a filter for the shifted barrier of \cref{rem:guarantee}.
Its inequality holds at the visited samples inside the corridor and fails outside, mostly on runs that start outside, and once inside every later sample of every run stays in the set $\{\bar\phi\ge\hat{\bar D}\}$ of \cref{rem:guarantee}.

At $\alpha=4$ none of the $29$ deployment-valid plain encoders yields a nonempty empirical corridor, whereas $29$ of $30$ safety-readout encoders do (\cref{fig:learned}(d)), and the counts are the same with only the pair estimate or only the descent in place of $\hat{\bar D}$.
The calibration already accounts for the variation of $h$ among the visited states themselves, so on the samples this charge is conservative. How much of it the unvisited part of the fibers requires is not determined by the samples, and we keep it in full.
The rate enters the barrier inequality and the margin $\sigma_\gamma=L\gamma/\alpha$, and leaves the defect untouched.
At $\alpha=1,2,4,8$ all $30$ safety-readout encoders pass the barrier check, with smallest margins $0.156$, $0.228$, $0.372$ and $0.619$, their corridor count is $3$, $19$, $29$ and $30$ of $30$, and the plain count is $0$ of $29$ at every rate.
The $59$ valid loops keep the cart on the track, so the split does not measure safe behavior.
It measures whether the estimated cost of \cref{thm:transfer} leaves any part of the barrier's range, and for the plain encoders it does not, so under this conservative charge the bound certifies $h\ge0$ nowhere.
The excluded plain encoder gets the same verdict from its four runs that stay on the track, an estimated defect of $0.75$ against a budget of $0.28$, and with that calibration held fixed its fifth run reaches $h=0$ while the barrier still reads $0.26$.

\begin{figure}[!t]
\centering
\includegraphics[width=\columnwidth]{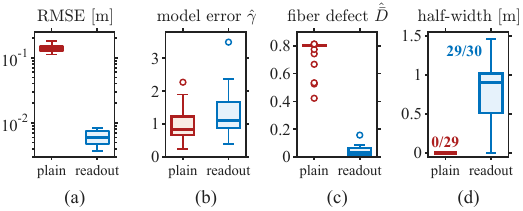}
\caption{Learned encoders over $30$ matched seeds: (a)~safety-readout RMSE, (b)~latent-model error $\hat\gamma$, (c)~empirical fiber defect $\hat{\bar D}$, (d)~half-width of the empirical transfer corridor at $\alpha=4$, zero when the corridor is empty, nonempty for $0$ of $29$ plain and $29$ of $30$ safety-readout encoders.}
\label{fig:learned}
\end{figure}

\section{Conclusion}\label{sec:conc}

Detectability of the discarded safety-relevant information governs whether latent transfer succeeds.
A direction that is unobservable, unstable, and unbounded below along the fiber admits no certificate, and no monitor downstream of the encoder can see it, while a contracting discarded direction admits the transfer bound.
In the linear case the boundary is exact under the hypotheses of \cref{prop:linear}: with safety-detectability every calibrated latent barrier function transfers with a finite margin, and without it none does and no continuous certificate exists.

Several limitations remain.
No example synthesizes a filter and the experiments stay on a cartpole with low-dimensional encoders that take the physical state as input and are trained on ground-truth states, their derivatives, and a full-state expert, information that the image-based world models motivating this work do not have.
The analysis assumes a static, continuously differentiable encoder of the state, and its extension to recurrent and discrete world-model encoders remains open.
\Cref{thm:nec} needs a direction that is exactly hidden, and there memory of the $z$-history cannot help.
When a learned encoder leaks the hidden direction into $z$, recovery becomes an observer problem, and how its gain, time window and noise sensitivity scale as the leakage vanishes remains open.

In the experiments $59$ of the $60$ encoders kept the cart on the track, and among them the latent-model error did not indicate which admit a non-vacuous estimated bound, while the estimated fiber defect did.
A safety-readout loss reduced the estimated fiber defect by more than one order of magnitude without degrading policy imitation.
All estimates are obtained from closed-loop samples of the physical state, so they do not certify the safety-readout encoders, and an estimated $h$ adds a margin the certificate does not charge.
They indicate at design time whether the estimated bound of a given encoder is vacuous, which the latent signals cannot reveal.

\bibliographystyle{IEEEtran}
\bibliography{references}

\end{document}